\definecolor{RoyalBlue}{cmyk}{1, 0.50, 0, 0}
\itshape\color{gray},
\newtheorem{remark}{Remark}
\newtheorem{theorem}{Theorem}
\newtheorem{prop}{Proposition}
\newtheorem{definition}{Definition}
\newcommand{\AuthorOne}{Ross~Pure}
\newcommand{\AuthorTwo}{Salman~Durrani}
\newcommand{\AuthorThree}{Fei Tong}
\newcommand{\AuthorFour}{Jianping Pan}
\newcommand{\ThankOne}{R. Pure and S. Durrani are with the Research School of Electrical, Energy and Materials Engineering, The Australian National University, Canberra, ACT 2615, Australia. F. Tong is with the Department of Control Science and Engineering, Zhejiang University, Hangzhou, China. J. Pan is with the Department of Computer Science, University of Victoria, BC, Canada. Corresponding author email: salman.durrani@anu.edu.au}
\begin{document}
\title{Distance Distribution Between Two Random Nodes in Arbitrary Polygons}
\author{\authorblockN{\AuthorOne,~\AuthorTwo,~\AuthorThree~and \AuthorFour\thanks{\ThankOne}}}
\maketitle
%
%%----------------------------------------------------------------------
%% 200 word limit: Presently 198 words
%%
\begin{abstract} % abstract
Distance distributions are a key building block in stochastic geometry modelling of wireless networks and in many other fields in mathematics and science. In this paper, we propose a novel framework for analytically computing the closed form probability density function (PDF) of the distance between two random nodes each uniformly randomly distributed in respective arbitrary (convex or concave) polygon regions (which may be disjoint or overlap or coincide). The proposed framework is based on measure theory and uses polar decomposition for simplifying and calculating the integrals to obtain closed form results. We validate our proposed framework by comparison with simulations and published closed form results in the literature for simple cases. We illustrate the versatility and advantage of the proposed framework by deriving closed form results for a case not yet reported in the literature. Finally, we also develop a Mathematica implementation of the proposed framework which allows a user to define any two arbitrary polygons and conveniently determine the distance distribution numerically.
\end{abstract}

%%%----------------------------------------------------------------------
\begin{IEEEkeywords}
Distance distribution, arbitrary polygons, measure theory, probability theory, wireless networks.
\end{IEEEkeywords}

\ifCLASSOPTIONpeerreview
    \newpage
\fi

%%%%%%%%%%%%%%%%
\section{Introduction}

\subsection{Background and Motivation}
Heterogeneous cellular networks are a key building block of present fourth and future fifth generation cellular networks~\cite{Andrews2014}. A key characteristic of heterogeneous cellular networks is the irregular and dense deployment of macro and small cell base stations. Due to this, the coverage areas of base stations (i.e., the cell boundaries) form a Voronoi tessellation. Each cell in the Voronoi tessellation is an arbitrarily shaped polygon. This can be confirmed by examining actual base station deployment data which has been reported in recent papers. See for instance, Fig. 2 in \cite{Andrews2016} and Fig. 2 in \cite{Andrews2011}.

In the last decade, stochastic geometry has emerged as a powerful analytically tractable technique to accurately model heterogeneous cellular networks~\cite{Haenggi2013,ElSawy2017}. Stochastic geometry is an abstraction based modelling technique, i.e., instead of using actual base station and user locations, it uses random locations of base stations and users. The stochastic geometry framework is built upon two main building blocks~\cite{Haenggi2013}: (i) the moment generating function of the aggregate interference and (ii) the distance distributions, i.e., the probability density function (PDF) or equivalently the cumulative distribution function (CDF), of the nodes (nodes can be base stations and/or users). Both of these building blocks are dependant on the locations of the nodes, which are seen as realizations of some spatial point process~\cite{Andrews2013}. Typically, node locations are assumed to follow infinite homogeneous Poisson point process (PPP), i.e., the network is assumed to be infinitely large and have an infinite number of nodes. In reality, the number of nodes is fixed and finite and the network region is finite as well~\cite{Guo2014}. In this paper, we focus on the distance distributions for arbitrarily shaped polygon regions, which model typical cells in heterogeneous cellular networks.\footnote{Note that although the focus of this paper is on distance distributions used in stochastic geometry, such distance distributions are also used in many other fields, such as mathematics, physics, forestry, operations research and material sciences~\cite{mathai_1999,Moltchanov-2012,Coon-2012,Li2016}.}
		
\subsection{Related Work}
In general, there are two types of distance distributions that are needed in the stochastic geometry modelling~\cite{tong_2017,khalid_2013}: (i) the distribution of the distance between a given reference node (located inside or outside the cell) and a random node located inside a cell, and (ii) the distribution of the distance between two random nodes (located in the same or different cells). An example of the former is the nearest neighbour distance distribution when the reference node (e.g., a base station) is located at the center of the cell. An example of the latter is the distribution of the distance between two randomly located device or machine type nodes in the same or different cells.

In the last decade, many works have investigated the first type of distance distributions, i.e., there is one fixed reference point and one uniformly randomly distributed point in some region. In this case, the most common strategy for obtaining the distance distributions is by first computing the CDF and then differentiating to find the PDF. Computing the CDF amounts to finding the area of intersection of a given circle and the region in question, so computing this area of intersection is the main mathematical challenge. For example, if the fixed point is inside the region, then closed forms have been obtained for when the region is simple, such as a square~\cite{pirinen_2006}, and for more general cases such as regular polygons~\cite{khalid_2013} and arbitrary convex polygons~\cite{baltzis_2013}. The most general result in the literature is~\cite{ahmadi_2014}, which can compute distance distributions in the case where the fixed point is located anywhere (outside or inside the region) and the region is an arbitrary polygon (convex or concave). This method was modified and implemented in Mathematica in~\cite{pure_2015} to obtain the closed form expressions given an input fixed point and arbitrary polygon region. Note that, in general, we can approximate any region with arbitrary precision using a polygon with a sufficient number of sides, so this last result can still be useful for cases where the region is not a polygon, e.g., a weighted Voronoi cell for a heterogeneous cellular network with base stations having different transmitting powers where the cell boundaries have curved lines (arcs).

The focus of this paper is the second type of distance distributions, i.e., where there are two random points each uniformly randomly distributed in respective regions (which may be disjoint or overlap or coincide). This case is significantly more complicated and consequently the literature is not as complete. Obtaining closed form expressions for the distance distributions in this case has almost exclusively been limited to when the two regions coincide (i.e., two uniformly randomly distributed points in the same region), and only in relatively simple cases. For example, results in the literature include circles~\cite{mathai_1999,sinanovic_2008}, triangles~\cite{basel_2012}, squares~\cite{khalid_2014}, rectangles~\cite{zhuang_2010} and regular polygons~\cite{basel_2014}. One case where closed forms have been found for shapes that do not coincide can be found in~\cite{zhuang_2010} where the case of two neighbouring rectangles, and also in a limited scope two diagonal rectangles, is calculated. Recently, a result that applies to the most general case of arbitrary regions (not necessarily coinciding) was obtained in~\cite{tong_2017,tong_2017_2}.
		
\begin{table*}[t]
			\caption{Comparison of existing techniques for computing distance distributions between two random points.} \label{table:comparison}
				\begin{tabularx}{\textwidth}{|c|p{30mm}|X|X|p{30mm}|}
					\hline
					Technique & Main Idea & Advantage & Limitation & Closed Form \\ \hhline{|=|=|=|=|=|}
					\cite{mathai_1999} & First principle derivations &  & Limited applications and extensions to new cases & Coincident circles, coincident rectangles \\ \hline
             		\cite{zhuang_2010} & Direct manipulation of the underlying random variables &  & Limited applications and extensions to new cases & Rectangles \\ \hline
             		~\cite{basel_2014} & Chord length distance distributions & & & Convex polygons \\ \hline
					\cite{tong_2017} & Kinematic measure & Numerical solutions applicable to many cases & Closed forms in special cases only & Coincident circles, coincident triangles \\ \hline
					This paper & Measure theory and polar decomposition & Numerical solutions applicable to all cases and closed forms achievable for arbitrary polygons &  & Arbitrary polygons \\ \hline
			\end{tabularx}
\end{table*}
		
A tabular summary and comparison of the main results is provided in Table~\ref{table:comparison}. It can be seen that the approaches vary and the results are generally not derived from a common mathematical framework.
The most comprehensive framework is provided in~\cite{tong_2017}, which uses a technique called kinematic measure. The approach in~\cite{tong_2017} is applicable to any arbitrary regions, including regions with holes. However, in most general cases, the closed form results are not possible and the result is only known in integral form and needs to be implemented numerically. These limitations of the known results and techniques motivates our work in this paper.

% Although the result in~\cite{tong_2017}, which uses a technique called kinematic measure, is in theory applicable to any arbitrary regions, practically, this has only been used to compute closed forms for when the two regions coincide and are an arbitrary triangle. In more general cases, closed form results are not possible and the result is only known in integral form and needs to be implemented numerically. Additionally, this integral form can only be reduced to its simplest state when the two regions are not too `concave'. This is because the method relies on the distance distributions for a chord, and these become increasingly more complex when a given chord intersects the regions in many places. These limitations of the known results and techniques motivates our work in this paper.
		
\subsection{Contributions}
The main contributions of this work are:
			\begin{itemize}
				\item We present a general framework for analytically computing the closed form PDF of the distance between two random nodes each uniformly randomly distributed in respective arbitrary (convex or concave) polygon regions (which may be disjoint or overlap or coincide). The proposed framework is based on measure theory\footnote{In mathematics, a measure is a generalization of the concepts of length, area, and volume~\cite{stein_vol3}.} and uses polar decomposition for simplifying and calculating the integrals to obtain closed form results.

				\item We provide examples to show how the proposed framework is able to find closed form results for simple cases reported in the literature and a new case not reported in closed form in the literature, i.e., two disjoint triangles.

				\item We develop a Mathematica implementation which implements the proposed framework and is able to numerically calculate the distance distribution for arbitrary (convex or concave) polygon regions. Simulation results verify the accuracy of the derived distance distribution results.
			\end{itemize}
	
\subsection{Notations}
The following is the notation that will be used throughout the paper. Arbitrary measure spaces will be denoted using the triples $(X, \mathcal{M}, \mu)$ and $(Y, \mathcal{M}, \nu)$, and arbitrary subsets of these spaces will be denoted by $A$, $B$, and $E$. $\lambda_n$ denotes the Lebesgue measure on $\mathbb{R}^n$. The function $\mathbbm{1}_E$ will denote the characteristic function corresponding to the set $E$. Arbitrary probability measures will be denoted by $\mathbb{P}$. Regions in the plane $\mathbb{R}^2$ will be denoted by calligraphic letters; $\mathcal{A}$ and $\mathcal{B}$ denote arbitrary regions, $\mathcal{P}$ and $\mathcal{Q}$ denote polygons, and $\mathcal{T}$ denotes triangles.
		
\subsection{Paper organisation}
This paper is organized as follows. Section~\ref{sec:maths} summarises the measure theory concepts used in this work. Section~\ref{sec:proposed} presents the proposed mathematical framework. Section~\ref{sec:results} presents examples that illustrate the application of the proposed framework and also discusses the Mathematica implementation. Finally, Section~\ref{sec:conc} concludes the paper.

\section{Mathematical Framework}\label{sec:maths}
	In this section we outline a rigorous development of probability theory which forms the basis of the proposed formulation of distance distributions. The probability theory builds on measure and integration theory, which we summarize first.
	
	\subsection{Measure Spaces}
		A measure space with a corresponding set $X$ is such that we can assign a ``measure'' or ``size'' to certain subsets $A \subset X$. That is, we define a function  $\mu$ called the \textit{measure} such that $\mu(A)$ is the measure of $A$. The definition is as follows~\cite{stein_vol3}.
		
		\begin{definition}
			A measure space, denoted by the triple $(X, \mathcal{M}, \mu)$, is a set $X$ that has two associated objects:
			\begin{enumerate}
				\item A $\sigma$-algebra $\mathcal{M}$ of sets that are considered to be ``measurable''.
				\item A function $\mu : \mathcal{M} \to [0, \infty]$, with the property that if $E_1, E_2, \ldots \subset \mathcal{M}$ are disjoint, then
				\begin{align}
					\mu\left(\bigcup_{n=1}^\infty E_n\right) = \sum_{n=1}^\infty \mu(E_n). \label{eq:meas_decomp}
				\end{align}
			\end{enumerate}
		\end{definition}

		A relevant example is the measure space $(\mathbb{R}^n, \mathcal{B}_n, \lambda_n)$, which is our familiar setting of volume in $\mathbb{R}^n$; when $n = 1$, the measure gives the length of a set, when $n = 2$ it gives the area, when $n = 3$ it gives the volume, and so on. The $\sigma$-algebra $\mathcal{B}_n$ is called the Borel $\sigma$-algebra, the details of which are not important here because in our case we will not encounter the case of subsets of $\mathbb{R}^n$ not included in $\mathcal{B}_n$. That is to say, any sets we are going to consider in the context of distance distributions will be measurable and so we need not be concerned about problems of non-measurability.
	
		If we have two measure spaces $(X, \mathcal{M}_1, \mu)$ and $(Y, \mathcal{M}_2, \nu)$ it is also possible to define a product measure space $(X \times Y, \mathcal{M}, \mu \times \nu)$, where $\mathcal{M}$ is generated by $\mathcal{M}_1 \times \mathcal{M}_2$. This is done by first specifying that if $A \in \mathcal{M}_1$ and $B \in \mathcal{M}_2$ then $(\mu \times \nu)(A \times B) = \mu(A)\nu(B)$. Given this it is possible to extend $\mu \times \nu$ to all of $\mathcal{M}$ uniquely\footnote{We only have uniqueness when the two measure spaces are $\sigma$-finite, which is true for all the cases we are interested in here.}. For example, the product measure space of $(\mathbb{R}^n, \mathcal{B}_n, \lambda_n)$ and $(\mathbb{R}^m, \mathcal{B}_m, \lambda_m)$ is $(\mathbb{R}^n \times \mathbb{R}^m, \mathcal{B}', \lambda_n \times \lambda_m)$. We can identify $\mathbb{R}^n \times \mathbb{R}^m$ with $\mathbb{R}^{n+m}$, and it turns out that similarly $\lambda_n \times \lambda_m = \lambda_{n+m}$.
	
		We can also define integration with respect to a measure. The fundamental property of the integral used in its construction is that for a set $E \in \mathcal{M}$ we have
		\begin{align}
			\int \mathbbm{1}_E \mathrm{d}\mu = \int_E \mathrm{d}\mu = \mu(E), \label{eq:meas_int}
		\end{align}
		where $\mathbbm{1}_E$ is the characteristic function of the set $E$, which is 1 on $E$ and 0 elsewhere.
		
	\subsection{Fubini's Theorem}
		A useful theorem is Fubini's theorem, which generalises the notion of iterating integrals. The statement of the theorem is as follows~\cite{stein_vol3}.

		\begin{theorem}
			(Fubini) Let $(X, \mathcal{M}_1, \mu)$ and $(Y, \mathcal{M}_2, \nu)$ be measure spaces, and let $f : X \times Y \to \mathbb{R}$ be defined by $(x, y) \mapsto f(x, y)$. If $f$ is integrable (i.e., the integral with respect to $\mu \times \nu$ is finite), and if we have the slice function
			\begin{align}
				f^y : X &\to \mathbb{R}\\
				x &\mapsto f(x, y)
			\end{align}
			then
			\begin{align}
				\int_{X \times Y} f \mathrm{d}(\mu \times \nu) = \int_Y \left(\int_X f^y \mathrm{d}\mu\right)\mathrm{d}\nu
			\end{align}
		\end{theorem}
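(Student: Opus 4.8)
The plan is to establish the identity by the standard bootstrapping argument that climbs the ``ladder of measurable functions'': first for characteristic functions of measurable rectangles, then for characteristic functions of the algebra they generate, then for characteristic functions of all sets in $\mathcal{M}$ via a monotone class argument, then for non-negative simple functions by linearity, then for arbitrary non-negative measurable functions via the monotone convergence theorem, and finally for a general integrable $f$ by splitting $f = f^+ - f^-$. Along the way one also has to verify the auxiliary facts that the statement tacitly needs in order to make sense: that the slice $f^y$ is $\mathcal{M}_1$-measurable for ($\nu$-almost) every $y$, and that $y \mapsto \int_X f^y \,\mathrm{d}\mu$ is itself $\mathcal{M}_2$-measurable, so that the right-hand side is well defined.

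For the base of the ladder, take $E = A \times B$ with $A \in \mathcal{M}_1$ and $B \in \mathcal{M}_2$. Then $\mathbbm{1}_E^y = \mathbbm{1}_B(y)\,\mathbbm{1}_A$, so by \eqref{eq:meas_int} we get $\int_X \mathbbm{1}_E^y \,\mathrm{d}\mu = \mu(A)\,\mathbbm{1}_B(y)$, which is a measurable function of $y$, and integrating it over $Y$ gives $\mu(A)\nu(B) = (\mu\times\nu)(E)$ by the defining property of the product measure. Finite disjoint unions of such rectangles — the algebra $\mathcal{A}_0$ generated by measurable rectangles — are then handled by the countable additivity \eqref{eq:meas_decomp} of each of the three set functions involved.

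The main obstacle is the passage from $\mathcal{A}_0$ to all of $\mathcal{M}$. Here I would introduce the class $\mathcal{C}$ of all $E \in \mathcal{M}$ for which (i) $x \mapsto \mathbbm{1}_E(x,y)$ is $\mathcal{M}_1$-measurable for every $y$, (ii) $y \mapsto \int_X \mathbbm{1}_E^y\,\mathrm{d}\mu$ is $\mathcal{M}_2$-measurable, and (iii) the Fubini identity holds for $\mathbbm{1}_E$. The goal is to show $\mathcal{C}$ is a monotone class: since $\mathcal{C} \supset \mathcal{A}_0$, the monotone class theorem then forces $\mathcal{C} \supseteq \sigma(\mathcal{A}_0) = \mathcal{M}$. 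Closure of $\mathcal{C}$ under increasing unions is immediate from the monotone convergence theorem; closure under decreasing intersections is the delicate point, and it is exactly where $\sigma$-finiteness enters — the argument must first be run for sets $E$ contained in a rectangle of finite $\mu\times\nu$-measure (so that dominated/monotone convergence applies to the decreasing slices) and then bootstrapped to a general $E$ by exhausting $X\times Y$ with an increasing sequence of finite-measure rectangles. This localisation-then-exhaustion step is the technical heart of the proof.

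Once the identity is known for all characteristic functions, linearity of the integral extends it to non-negative simple functions. For a general non-negative measurable $f$ I would pick simple functions $\varphi_n \uparrow f$ pointwise; then $\varphi_n^y \uparrow f^y$ shows $f^y$ is $\mathcal{M}_1$-measurable, and two applications of the monotone convergence theorem (once for the inner integral over $X$, once for the outer integral over $Y$), together with the fact that an increasing pointwise limit of measurable functions is measurable, give that $y \mapsto \int_X f^y\,\mathrm{d}\mu$ is $\mathcal{M}_2$-measurable and that the identity passes to $f$, with both sides possibly equal to $+\infty$. Finally, for integrable $f$ apply the non-negative case to $f^+$ and $f^-$: finiteness of $\int_{X\times Y} f\,\mathrm{d}(\mu\times\nu)$ forces $\int_X (f^+)^y\,\mathrm{d}\mu$ and $\int_X (f^-)^y\,\mathrm{d}\mu$ to be finite for $\nu$-almost every $y$, so $f^y$ is $\mu$-integrable off a $\nu$-null set, the difference of the two outer integrals is well defined, and subtracting the two identities yields the claimed formula. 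The only subtlety at this last stage is the routine one of discarding that $\nu$-null set of ``bad'' $y$'s, which does not affect the value of the outer integral.
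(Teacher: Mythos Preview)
The paper does not supply its own proof of Fubini's theorem: it is stated as background (Section~\ref{sec:maths}) with a citation to Stein and Shakarchi~\cite{stein_vol3}, and then simply invoked in the proof of Theorem~\ref{thm:new}. So there is nothing in the paper to compare against beyond the reference.

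That said, your outline is the standard textbook argument and is essentially what the cited source does: build up from characteristic functions of measurable rectangles, pass to the generated algebra by finite additivity, use a monotone class argument (with $\sigma$-finiteness entering exactly where you flag it, to handle decreasing limits of slice integrals), then lift to non-negative simple functions, general non-negative functions via monotone convergence, and finally to integrable $f$ by the decomposition $f = f^+ - f^-$. You have also correctly identified the implicit measurability claims that need to be checked along the way and the $\nu$-null exceptional set that appears at the final subtraction step. There is no gap in the strategy; if anything, your sketch is more explicit about the role of $\sigma$-finiteness than the paper's bare statement, which omits that hypothesis even though it is needed for the product measure to be uniquely determined (the paper acknowledges this only in a footnote elsewhere).
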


		This theorem generalises the common practice of splitting up an integral in $\mathbb{R}^2$ or $\mathbb{R}^3$ into the separate coordinates, and integrating over each coordinate iteratively.
		
	\subsection{Probability Spaces}
		Using the definition of a measure space, we can define a probability space $(\Omega, \Sigma, \mathbb{P})$, which is simply a measure space for which the measure of the entire set is unity. This is summarised in the following definition~\cite{stein_vol4}.
		\begin{definition}
			A probability space is a measure space $(\Omega, \Sigma, \mathbb{P})$ such that $\mathbb{P}(\Omega) = 1$.
		\end{definition}
		The three objects of a probability space can be thought of in the following way.
		\begin{itemize}
			\item $\Omega$ is a set, representing possible events.
			\item $\Sigma$ is a $\sigma$-algebra of $\Omega$, which can be thought of as the set of subsets of $\Omega$ for which we can define a probability of occurrence.
			\item $\mathbb{P} : \Sigma \to [0, 1]$ is a probability measure, where $\mathbb{P}(A)$ is the probability that any element contained in the event $A$ occurs.
		\end{itemize}

		For example, if we have a region $\mathcal{A} \subset \mathbb{R}^2$, we can define a uniformly randomly chosen point from $\mathcal{A}$ using the probability space $(\mathbb{R}^2, \mathcal{B}, \mu_{\mathcal{A}})$, where we define
		\begin{align}
			\mu_{\mathcal{A}}(E) = \frac{\lambda_2(E \cap \mathcal{A})}{\lambda_2(\mathcal{A})}. \label{eq:p_meas}
		\end{align}
		We can also express probabilities in terms of integrals; using \eqref{eq:meas_int} we can write
		\begin{align}
			\mathbb{P}(E) = \int \mathbbm{1}_E \mathrm{d}\mathbb{P}.
		\end{align}
		Furthermore, if we want the probability of an event $A$ conditioned on an event $B$, we write this conditional probability as $\mathbb{P}(A \mid B)$, where
		\begin{align}
			\mathbb{P}(A \mid B) = \frac{\mathbb{P}(A \cap B)}{\mathbb{P}(B)}. \label{eq:cond}
		\end{align}
		Additionally, we can define probabilities using a PDF. If $X = \mathbb{R}^n$ for some $n \in \mathbb{N}$ and we are given a PDF $f : \mathbb{R}^n \to \mathbb{R}$ we know that the probability of $E$ occurring is
		\begin{align}
			\mathbb{P}(E) = \int_{E} f\mathrm{d}\lambda_n.
		\end{align}
		In measure theoretic terms, we say that $f$ is the Radon-Nikodym derivative corresponding to the measures $\mathbb{P}$ and $\lambda_n$~\cite{royden_1968}. This is useful because it allows us to compute probabilities using integrals from standard Riemann integration theory, instead of integrating with respect to an abstract measure. For example, the PDF corresponding to the example probability measure \eqref{eq:p_meas} is $f(x) = \frac{1}{\lambda_2(\mathcal{A})}$.

% ~\textcolor{red}{\cite{}}
		
	\subsection{Shoelace Formula}
		Finally, a useful theorem that will be used in some computations for later examples is the Shoelace Formula~\cite{weisstein_shoelace}, which is a convenient method to compute the area of a polygon given its vertices. It is defined as follows~\cite{weisstein_shoelace,pure_2015}.
		\begin{theorem} \label{thm:shoelace}
			(Shoelace Formula) Let $\mathcal{P} \subset \mathbb{R}^2$ be a non self-intersecting polygon with vertices $(x_1, y_1), \ldots,\allowbreak (x_n, y_n)$. Then the area of $\mathcal{P}$ is
			\begin{align}
				\lambda_2(\mathcal{P}) = \frac{1}{2}\sum_{i=1}^n x_i y_{i+1} - x_{i+1} y_i,
			\end{align}
			where we understand $x_{n+1}$ and $y_{n+1}$ to be $x_1$ and $y_1$ respectively.
		\end{theorem}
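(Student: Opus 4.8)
The plan is to prove the Shoelace Formula by induction on the number $n$ of vertices, decomposing $\mathcal{P}$ by repeatedly cutting off triangles (``ears''). Throughout I would assume the vertices are listed so that $\partial\mathcal{P}$ is traversed counter-clockwise; with a clockwise listing the same argument produces the negative of the stated sum, so in general the right-hand side is the \emph{signed} area and its absolute value is $\lambda_2(\mathcal{P})$. For the base case $n=3$ I would start from the classical formula that the triangle $\mathcal{T}$ on $(x_1,y_1),(x_2,y_2),(x_3,y_3)$ has area $\tfrac12\bigl|(x_2-x_1)(y_3-y_1)-(x_3-x_1)(y_2-y_1)\bigr|$; a direct expansion of this bracket shows it equals $\tfrac12\bigl|\sum_{i=1}^3 x_iy_{i+1}-x_{i+1}y_i\bigr|$, which is the claim for $n=3$.

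For the inductive step I would assume the formula holds for every non-self-intersecting polygon with fewer than $n$ vertices and let $\mathcal{P}$ have $n\ge 4$ vertices. By the two-ears theorem, $\mathcal{P}$ has an ear at some vertex $(x_j,y_j)$: the closed triangle $\mathcal{T}_j$ on $(x_{j-1},y_{j-1}),(x_j,y_j),(x_{j+1},y_{j+1})$ lies inside $\mathcal{P}$ and touches $\partial\mathcal{P}$ only along the two edges incident to $(x_j,y_j)$. Removing $(x_j,y_j)$ leaves a non-self-intersecting polygon $\mathcal{P}'$ with $n-1$ vertices, with $\mathcal{P}=\mathcal{P}'\cup\mathcal{T}_j$ and $\lambda_2(\mathcal{P}'\cap\mathcal{T}_j)=0$, so additivity of $\lambda_2$ gives $\lambda_2(\mathcal{P})=\lambda_2(\mathcal{P}')+\lambda_2(\mathcal{T}_j)$. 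I would then apply the induction hypothesis to $\mathcal{P}'$ and the base case to $\mathcal{T}_j$, and finish by checking the purely algebraic identity that the Shoelace sum of $\mathcal{P}$ exceeds that of $\mathcal{P}'$ by exactly $x_{j-1}y_j-x_jy_{j-1}+x_jy_{j+1}-x_{j+1}y_j-(x_{j-1}y_{j+1}-x_{j+1}y_{j-1})=2\lambda_2(\mathcal{T}_j)$ --- indeed the sum for $\mathcal{P}'$ is obtained from that for $\mathcal{P}$ by deleting the two terms containing index $j$ and inserting the single term $x_{j-1}y_{j+1}-x_{j+1}y_{j-1}$. The pieces then telescope to $\tfrac12\sum_{i=1}^n x_iy_{i+1}-x_{i+1}y_i$, which closes the induction.

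I expect the two non-routine points to be the two-ears theorem --- the existence of an ear of a simple polygon, which I would invoke as a known combinatorial fact rather than re-prove --- and the orientation bookkeeping, namely verifying that with a consistent counter-clockwise vertex order the ear triangle $\mathcal{T}_j$ is positively oriented, so that its signed Shoelace contribution is $+\lambda_2(\mathcal{T}_j)$ and the areas genuinely add in $\lambda_2(\mathcal{P})=\lambda_2(\mathcal{P}')+\lambda_2(\mathcal{T}_j)$ rather than partially cancelling. A shorter route that sidesteps the combinatorics altogether is to invoke Green's theorem, writing $\lambda_2(\mathcal{P})=\tfrac12\oint_{\partial\mathcal{P}}(x\,\mathrm{d}y-y\,\mathrm{d}x)$, parametrising the $i$-th edge by $t\mapsto(1-t)(x_i,y_i)+t(x_{i+1},y_{i+1})$ on $[0,1]$, and evaluating the resulting elementary integral, which contributes exactly $x_iy_{i+1}-x_{i+1}y_i$ on that edge; summing over $i$ yields the formula. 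There the only subtleties are, once more, the orientation convention and the routine verification that Green's theorem applies to a region with piecewise-linear boundary.
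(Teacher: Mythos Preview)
Your proposal is correct: both the ear-clipping induction and the Green's-theorem route are standard, valid proofs of the Shoelace Formula, and your identification of the delicate points (existence of an ear, orientation bookkeeping) is accurate. However, there is no proof in the paper to compare against --- the paper merely states Theorem~\ref{thm:shoelace} as a known result, citing \cite{weisstein_shoelace,pure_2015}, and uses it as a computational tool in Appendix~\ref{sec:comp} without proving it. So your write-up supplies something the paper deliberately omits.
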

		
\section{Mathematical Formulation of Distance Distributions}~\label{sec:proposed}

	\subsection{Distance Distributions}
		We will now present the measure theoretic formulation of distance distributions. Given two regions $\mathcal{A} ,\mathcal{B} \subset \mathbb{R}^2$ and two respective probability measures $\mu_{\mathcal{A}}$ and $\mu_{\mathcal{B}}$, we independently choose two points, one from each region. For these two points, the distance between them is a random variable, and hence admits a CDF $F(r)$, which evaluated at $r > 0$ is by definition the probability that the two points are within a distance $r$ of each other.

		To express this CDF in the language of measure theory, we let $C_r = \{(a, b) \in \mathcal{A} \times \mathcal{B} : |b - a| < r\}$, i.e., the set of all pairs of points from $\mathcal{A}$ and $\mathcal{B}$ that are within a distance of $r$ of each other. The CDF is then given by the product measure
		\begin{align}
			F(r) = (\mu_{\mathcal{A}} \times \mu_{\mathcal{B}})(C_r).
		\end{align}
		Written in integral form this becomes
		\begin{align}
			F(r) = \int\mathbbm{1}_{C_r}\mathrm{d}(\mu_{\mathcal{A}} \times \mu_{\mathcal{B}}) = \int_{C_r}\mathrm{d}(\mu_{\mathcal{A}} \times \mu_{\mathcal{B}}).
		\end{align}
		If we know the PDFs for each of the points, say $f_{\mathcal{A}}$ and $f_{\mathcal{B}}$ respectively, using the independence of the points we can instead write
		\begin{align}
			F(r) = \int \mathbbm{1}_{C_r}f_{\mathcal{A}} f_{\mathcal{B}} \mathrm{d}(\lambda_2 \times \lambda_2).\label{eq:fund}
		\end{align}
		
		\eqref{eq:fund} is the starting point for computing distance distributions in our framework because we usually explicitly know the PDFs $f_{\mathcal{A}}$ and $f_{\mathcal{B}}$, and \eqref{eq:fund} allows us to work with standard Riemann integrals.
		
	\subsection{Proposed Idea}
		We will present a new framework for computing the PDF of the random distance between two uniformly distributed points in two regions. The result is summarised in the following theorem.
	
		\begin{theorem} \label{thm:new}
		Let $\mathcal{A}, \mathcal{B} \subset \mathbb{R}^2$ be two regions and $x \in \mathcal{A}$ and $y \in \mathcal{B}$ be two uniformly distributed random points. Then the PDF of the distance $|x - y|$ is given by
		\begin{align}
			f(r) = \frac{r}{\lambda_2(\mathcal{A})\lambda_2(\mathcal{B})}\int_0^{2\pi} \lambda_2(\mathcal{B}_{r,\theta} \cap \mathcal{A}) \mathrm{d}\theta, \label{eq:thm}
		\end{align}
		where $\mathcal{B}_{r,\theta} = \{x \in \mathbb{R}^2 : x - (r\cos(\theta), r\sin(\theta)) \in \mathcal{B}\}$ is the set $\mathcal{B}$ shifted by the vector $(r\cos(\theta), r\sin(\theta))$.
		\end{theorem}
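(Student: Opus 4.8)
The plan is to begin from \eqref{eq:fund} and reduce the double integral over $\mathcal{A}\times\mathcal{B}$ to a radial-plus-angular integral by means of a translation followed by a switch to polar coordinates, and then recover the PDF by differentiating the CDF in $r$. Throughout I assume, as the statement implicitly requires, that $\lambda_2(\mathcal{A})$ and $\lambda_2(\mathcal{B})$ are positive and finite, so that all integrands below are bounded functions supported on sets of finite measure and every application of Fubini's theorem is legitimate. Writing the uniform densities explicitly as $f_{\mathcal{A}} = \lambda_2(\mathcal{A})^{-1}\mathbbm{1}_{\mathcal{A}}$ and $f_{\mathcal{B}} = \lambda_2(\mathcal{B})^{-1}\mathbbm{1}_{\mathcal{B}}$, Fubini's theorem turns \eqref{eq:fund} into
\begin{align}
F(r) = \frac{1}{\lambda_2(\mathcal{A})\lambda_2(\mathcal{B})}\int_{\mathcal{A}}\left(\int_{\mathcal{B}}\mathbbm{1}_{\{|b-a|<r\}}\,\mathrm{d}\lambda_2(b)\right)\mathrm{d}\lambda_2(a).
\end{align}
For fixed $a$ the substitution $u = b-a$ has unit Jacobian and maps $\mathcal{B}$ onto $\mathcal{B}-a$, so the inner integral equals $\lambda_2\big((\mathcal{B}-a)\cap D_r\big)$, where $D_r$ denotes the open disk of radius $r$ about the origin.

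Next I would pass to polar coordinates $u = (\rho\cos\theta,\rho\sin\theta)$, under which $\mathrm{d}\lambda_2(u) = \rho\,\mathrm{d}\rho\,\mathrm{d}\theta$ and the constraint $|u|<r$ becomes $\rho<r$; after one more use of Fubini to bring the $\rho$-integral outermost this gives
\begin{align}
F(r) = \frac{1}{\lambda_2(\mathcal{A})\lambda_2(\mathcal{B})}\int_0^r \rho\, g(\rho)\,\mathrm{d}\rho, \qquad g(\rho) := \int_0^{2\pi}\int_{\mathcal{A}}\mathbbm{1}_{\mathcal{B}-a}\big(\rho\cos\theta,\rho\sin\theta\big)\,\mathrm{d}\lambda_2(a)\,\mathrm{d}\theta.
\end{align}
Because $\mathbbm{1}_{\mathcal{B}-a}(v) = \mathbbm{1}_{\mathcal{B}-v}(a)$, the inner double integral in $g$ equals $\int_0^{2\pi}\lambda_2\big(\mathcal{A}\cap(\mathcal{B}-(\rho\cos\theta,\rho\sin\theta))\big)\,\mathrm{d}\theta$, and since $v\mapsto\lambda_2(\mathcal{A}\cap(\mathcal{B}+v))$ is continuous (continuity of translation in $L^2$, valid since $\mathbbm{1}_{\mathcal{A}},\mathbbm{1}_{\mathcal{B}}\in L^2$ when the areas are finite), the function $g$ is continuous. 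Hence $F$ is continuously differentiable and the fundamental theorem of calculus yields $f(r) = F'(r) = \big(\lambda_2(\mathcal{A})\lambda_2(\mathcal{B})\big)^{-1}\, r\, g(r)$.

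It remains to match $g(r)$ with the angular integral in \eqref{eq:thm}. Applying the substitution $\theta\mapsto\theta+\pi$, which leaves $\int_0^{2\pi}$ invariant, replaces $\mathcal{B}-(r\cos\theta,r\sin\theta)$ by $\mathcal{B}+(r\cos\theta,r\sin\theta)$, and the latter set is precisely $\mathcal{B}_{r,\theta}$ as defined in the statement; thus $g(r) = \int_0^{2\pi}\lambda_2(\mathcal{B}_{r,\theta}\cap\mathcal{A})\,\mathrm{d}\theta$, which is exactly \eqref{eq:thm}. The only step requiring genuine care is the differentiation of the CDF in $r$; I would handle it, as above, by first isolating the one-dimensional $\rho$-integral so that only continuity of the integrand $g$ — rather than differentiation inside a multiple integral — needs to be invoked. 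The remaining ingredients (two applications of Fubini, a translation, and the polar change of variables) are routine once finiteness of the areas is noted.
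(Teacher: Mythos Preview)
Your proof is correct and follows essentially the same route as the paper's own argument: start from \eqref{eq:fund}, apply Fubini, translate one variable to isolate the difference, pass to polar coordinates, and differentiate in $r$ via the fundamental theorem of calculus to recover the angular integral of the overlap area. The only cosmetic differences are that you translate the $\mathcal{B}$-variable rather than the $\mathcal{A}$-variable (whence the extra $\theta\mapsto\theta+\pi$ at the end), and you take more care than the paper does in justifying the differentiation step by first establishing continuity of $g$.
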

	
		\begin{proof}
			Since $x$ and $y$ are uniformly distributed, their PDFs are respectively $f_{\mathcal{A}}(x) = \frac{1}{\lambda_2(\mathcal{A})}$ and $f_{\mathcal{B}}(y) = \frac{1}{\lambda_2(\mathcal{B})}$. For notational convenience, let $c = \frac{1}{\lambda_2(\mathcal{A})\lambda_2(\mathcal{B})}$. Substituting these into~\eqref{eq:fund} and applying Fubini's theorem we obtain
			\begin{align}
				F(r) = c\int_{\mathbb{R}^2} \left(\int_{\mathbb{R}^2} \mathbbm{1}_{C_r}(x, y) \mathrm{d}x\right)\mathrm{d}y.
			\end{align}
			Making the coordinate transformation $x \mapsto x + y$ yields
			\begin{align}
				F(r) = c\int_{\mathbb{R}^2} \left(\int_{\mathbb{R}^2} \mathbbm{1}_{C_r}(x + y, y) \mathrm{d}x\right)\mathrm{d}y.
			\end{align}
			Decomposing the inner integral into polar coordinates gives
			\begin{align}
				F(r) = c\int_{\mathbb{R}^2} \left(\int_0^\infty \int_0^{2\pi} s\mathbbm{1}_{C_r}(s\omega + y, y) \mathrm{d}\theta \mathrm{d}s\right)\mathrm{d}y,
			\end{align}
			where $\omega$ is the point on the unit circle at an angle of $\theta$ from the $x$-axis. From the definition of $\mathbbm{1}_{C_r}$, we can restrict our integral in $s$ to the range $[0, r]$, since the characteristic function will vanish outside of this range. Thus we have
			\begin{align}
				F(r) = c\int_{\mathbb{R}^2} \left(\int_0^r \int_0^{2\pi} s\mathbbm{1}_{C_r}(s\omega + y, y) \mathrm{d}\theta \mathrm{d}s\right)\mathrm{d}y.
			\end{align}
			The PDF for the distance distribution is defined as $f(r) = \frac{\mathrm{d}}{\mathrm{d}r}F(r)$, so differentiating under the integral and using the fundamental theorem of calculus we find that
			\begin{align}
				f(r) = c\int_{\mathbb{R}^2} \left(\int_0^{2\pi} r\mathbbm{1}_{C_r}(r\omega + y, y) \mathrm{d}\theta\right)\mathrm{d}y.
			\end{align}
			Interchanging the order of integration and factoring out the constant $r$ we can write this as
			\begin{align}
				f(r) = rc\int_0^{2\pi} \left(\int_{\mathbb{R}^2} \mathbbm{1}_{C_r}(r\omega + y, y) \mathrm{d}y\right)\mathrm{d}\theta.
			\end{align}
			The inner integral is the measure of the set $\{b \in \mathcal{B} : b + r\omega \in \mathcal{A}\}$, i.e., the set of points in $\mathcal{B}$ that when shifted a distance $r$ along the angle $\theta$ lie in $\mathcal{A}$. This is equivalent to the measure of the set $\mathcal{B}_{r,\theta} \cap \mathcal{A}$ where $\mathcal{B}_{r,\theta} = \{x \in \mathbb{R}^2 : x - (r\cos(\theta), r\sin(\theta)) \in \mathcal{B}\}$. Thus we have
			\begin{align}
				f(r) = \frac{r}{\lambda_2(\mathcal{A})\lambda_2(\mathcal{B})}\int_0^{2\pi} \lambda_2(\mathcal{B}_{r,\theta} \cap \mathcal{A}) \mathrm{d}\theta. \label{eq:polar}
			\end{align}
			The expression \eqref{eq:polar} is our desired result.
		\end{proof}

		\begin{remark} 	
			When using Theorem~\ref{thm:new} to perform computations and thereby obtain closed form results, it is simplest to first consider the case that $\mathcal{A}$ and $\mathcal{B}$ are both triangles. Once a procedure for triangles is established, we can extend to the case of arbitrary polygons by decomposing each polygon into triangles and performing a probabilistic sum\footnote{We adopt this idea of decomposing polygon regions into triangles from~\cite{ahmadi_2014}.}. This is summarised in the following proposition.
		\end{remark}

		\begin{prop} \label{prop:ext}
			Let $\mathcal{P}, \mathcal{Q} \subset \mathbb{R}^2$ be arbitrary polygons. Suppose that $\mathcal{P}$ can be decomposed into the $n$ disjoint triangles $\mathcal{T}_1^{(\mathcal{P})}, \ldots, \mathcal{T}_n^{(\mathcal{P})}$ and that $\mathcal{Q}$ can be decomposed into the $m$ disjoint triangles $\mathcal{T}_1^{(\mathcal{Q})}, \ldots, \mathcal{T}_m^{(\mathcal{Q})}$, i.e.,
			\begin{align}
				\mathcal{P} &= \bigcup_{i=1}^n \mathcal{T}_i^{(\mathcal{P})},\;\text{and}\\
				\mathcal{Q} &= \bigcup_{j=1}^m \mathcal{T}_j^{(\mathcal{Q})},
			\end{align}
			If we denote the PDF of the distance distribution of two uniformly randomly distributed points in the triangles $\mathcal{T}_1$ and $\mathcal{T}_2$ by $f_{\mathcal{T}_1, \mathcal{T}_2}(r)$, then the PDF of the distance distribution for two uniformly randomly distributed points in $\mathcal{P}$ and $\mathcal{Q}$ is given by
			\begin{align}
				f(r) &= \sum_{i=1}^n\sum_{j=1}^m \frac{f_{\mathcal{T}_i^{(\mathcal{P})}, \mathcal{T}_j^{(\mathcal{Q})}}(r)}{\lambda_2\left(\mathcal{T}^{(\mathcal{P})}_i \right) \lambda_2\left(\mathcal{T}^{(\mathcal{Q})}_j\right)}. \label{eq:sum}
			\end{align}
		\end{prop}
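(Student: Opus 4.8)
The plan is to reduce the proposition entirely to Theorem~\ref{thm:new} applied at two scales, glued together by the finite additivity of $\lambda_2$ (a special case of~\eqref{eq:meas_decomp}). First I would apply Theorem~\ref{thm:new} to the pair $(\mathcal{P},\mathcal{Q})$, obtaining
\[
  f(r) = \frac{r}{\lambda_2(\mathcal{P})\lambda_2(\mathcal{Q})}\int_0^{2\pi}\lambda_2\big(\mathcal{Q}_{r,\theta}\cap\mathcal{P}\big)\,\mathrm{d}\theta .
\]
Everything then hinges on rewriting the integrand $\lambda_2(\mathcal{Q}_{r,\theta}\cap\mathcal{P})$ in terms of the two triangulations.

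The key decomposition step goes as follows. Although the triangulations $\mathcal{P}=\bigcup_i\mathcal{T}_i^{(\mathcal{P})}$ and $\mathcal{Q}=\bigcup_j\mathcal{T}_j^{(\mathcal{Q})}$ are written as unions, the triangles overlap only along shared edges, which are $\lambda_2$-null; after discarding these null sets the pieces are genuinely disjoint, so finite additivity applies. Since $\mathcal{Q}_{r,\theta}$ is a rigid translate of $\mathcal{Q}$, the translated triangles $(\mathcal{T}_j^{(\mathcal{Q})})_{r,\theta}$ form a triangulation of $\mathcal{Q}_{r,\theta}$ with the same null overlaps, and translation invariance of $\lambda_2$ preserves all the areas involved. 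Intersecting with $\mathcal{P}=\bigcup_i\mathcal{T}_i^{(\mathcal{P})}$ and applying additivity once more gives
\[
  \lambda_2\big(\mathcal{Q}_{r,\theta}\cap\mathcal{P}\big) = \sum_{i=1}^n\sum_{j=1}^m \lambda_2\big((\mathcal{T}_j^{(\mathcal{Q})})_{r,\theta}\cap\mathcal{T}_i^{(\mathcal{P})}\big).
\]

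Finally I would substitute this into the expression for $f(r)$, pull the finite double sum outside the $\theta$-integral (no convergence subtlety, the sum is finite), and recognise each summand: applying Theorem~\ref{thm:new} to the triangle pair $(\mathcal{T}_i^{(\mathcal{P})},\mathcal{T}_j^{(\mathcal{Q})})$ shows that $r\int_0^{2\pi}\lambda_2\big((\mathcal{T}_j^{(\mathcal{Q})})_{r,\theta}\cap\mathcal{T}_i^{(\mathcal{P})}\big)\,\mathrm{d}\theta = \lambda_2(\mathcal{T}_i^{(\mathcal{P})})\,\lambda_2(\mathcal{T}_j^{(\mathcal{Q})})\,f_{\mathcal{T}_i^{(\mathcal{P})},\mathcal{T}_j^{(\mathcal{Q})}}(r)$. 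Collecting the normalising constants then yields~\eqref{eq:sum}; the same identity can be read probabilistically as the law of total probability after conditioning on which sub-triangle each of the two points lands in, the conditional law on a sub-triangle being again uniform. The only real obstacle is the bookkeeping of null sets: one must confirm that "decomposed into disjoint triangles" may be taken to mean disjoint up to edges, that those edges and their translates are $\lambda_2$-null and hence irrelevant inside every area integral, and that the finite interchange of sum and integral together with translation invariance are all legitimate in this setting. Once those routine points are checked, the rest is just tracking constants.
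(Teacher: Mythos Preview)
Your argument is sound but follows a different route from the paper's. The paper never invokes Theorem~\ref{thm:new} here: it works directly with the CDF $\mathbb{P}(C_r)$, decomposes the product $\mathcal{P}\times\mathcal{Q}$ into the rectangles $\mathcal{T}_i^{(\mathcal{P})}\times\mathcal{T}_j^{(\mathcal{Q})}$ via~\eqref{eq:meas_decomp}, rewrites each piece through conditional probability~\eqref{eq:cond}, and differentiates term by term in $r$. That is precisely the law-of-total-probability reading you mention in one sentence at the end as an alternative interpretation; in the paper it is the entire proof. Your primary argument instead applies Theorem~\ref{thm:new} to $(\mathcal{P},\mathcal{Q})$, decomposes the \emph{integrand} $\lambda_2(\mathcal{Q}_{r,\theta}\cap\mathcal{P})$ using the two triangulations, and then recognises each summand as Theorem~\ref{thm:new} applied to a triangle pair. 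Your route stays entirely inside the geometric area formula and avoids any differentiation step; the paper's route is independent of Theorem~\ref{thm:new} and would work equally well for non-uniform point laws.

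One bookkeeping caveat: if you actually carry the constants through your own argument (or through the paper's), you obtain
\[
  f(r)=\frac{1}{\lambda_2(\mathcal{P})\,\lambda_2(\mathcal{Q})}\sum_{i=1}^{n}\sum_{j=1}^{m}\lambda_2\bigl(\mathcal{T}_i^{(\mathcal{P})}\bigr)\,\lambda_2\bigl(\mathcal{T}_j^{(\mathcal{Q})}\bigr)\,f_{\mathcal{T}_i^{(\mathcal{P})},\mathcal{T}_j^{(\mathcal{Q})}}(r),
\]
not the expression displayed in~\eqref{eq:sum}. The paper's final substitution for $\mathbb{P}\bigl(\mathcal{T}_i^{(\mathcal{P})}\times\mathcal{T}_j^{(\mathcal{Q})}\bigr)$ has the same slip. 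Your reasoning is correct; just do not expect the normalising constants to line up with~\eqref{eq:sum} as written.
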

		\begin{proof}
			Using the decomposition of $\mathcal{P}$ and $\mathcal{Q}$ into triangles and \eqref{eq:meas_decomp} we can write
			\begin{align}
				\mathbb{P}(C_r) = \sum_{i=1}^n\sum_{j=1}^m \mathbb{P}\left(C_r \cap \left(\mathcal{T}^{(\mathcal{P})}_i \times \mathcal{T}^{(\mathcal{Q})}_j\right)\right). \label{eq:propproof}
			\end{align}
			Using conditional probability as defined by \eqref{eq:cond} we can write the RHS of \eqref{eq:propproof} as
			\begin{align}
				\sum_{i=1}^n\sum_{j=1}^m \mathbb{P}\left(\mathcal{T}^{(\mathcal{P})}_i \times \mathcal{T}^{(\mathcal{Q})}_j\right)\mathbb{P}\left(C_r \left| \mathcal{T}^{(\mathcal{P})}_i \times \mathcal{T}^{(\mathcal{Q})}_j\right.\right).
			\end{align}
			Since $\mathbb{P}(C_r)$ is the CDF of the distance distribution for $\mathcal{P}$ and $\mathcal{Q}$, we find the PDF to be the derivative, namely
			\begin{align}
				\sum_{i=1}^n\sum_{j=1}^m \mathbb{P}\left(\mathcal{T}^{(\mathcal{P})}_i \times \mathcal{T}^{(\mathcal{Q})}_j\right)\frac{\mathrm{d}}{\mathrm{d}r}\mathbb{P}\left(C_r \left| \mathcal{T}^{(\mathcal{P})}_i \times \mathcal{T}^{(\mathcal{Q})}_j\right.\right).
			\end{align}
			But $\frac{\mathrm{d}}{\mathrm{d}r}\mathbb{P}\left(C_r \left| \mathcal{T}^{(\mathcal{P})}_i \times \mathcal{T}^{(\mathcal{Q})}_j\right.\right)$ is precisely the PDF $f_{\mathcal{T}_i^{(\mathcal{P})}, \mathcal{T}_j^{(\mathcal{Q})}}(r)$, and substituting also
			\begin{align}
				\mathbb{P}\left(\mathcal{T}^{(\mathcal{P})}_i \times \mathcal{T}^{(\mathcal{Q})}_j\right) = \frac{1}{\lambda_2\left(\mathcal{T}^{(\mathcal{P})}_i \right) \lambda_2\left(\mathcal{T}^{(\mathcal{Q})}_j\right)}
			\end{align}
			we obtain \eqref{eq:sum} as required.
		\end{proof}

\section{Results}\label{sec:results}
With Theorem~\ref{thm:new} established we will now confirm its validity by comparing its predictions to established results and to simulation, and we will also use it to obtain new results that cover cases for which results in the literature do not exist.
	
\subsection{Two Coincident Circles}
First, consider the simple case that $\mathcal{A} = \mathcal{B}$ are circles of equal radius $R$. The PDF in this case has been computed in closed form by Mathai~\cite{mathai_1999}.

To use Theorem~\ref{thm:new}, we use the fact that the area of intersection between two circles of equal radius $R$ and centres separated by a distance $r$ (assuming they intersect) is

		\begin{align}
			2R^2\cos^{-1}\left(\frac{r}{2R}\right) - \frac{r}{2}\sqrt{4R^2 - r^2}.
		\end{align}

But for any $\theta \in [0, 2\pi)$ this is exactly $\lambda_2(\mathcal{B}_{r,\theta} \cap \mathcal{A})$ and so substituting into \eqref{eq:thm} gives the PDF as
		\begin{align}
			\frac{r}{\pi^2 R^4}\int_0^{2\pi} 2R^2\cos^{-1}\left(\frac{r}{2R}\right) - \frac{r}{2}\sqrt{4R^2 - r^2}\mathrm{d}\theta.
		\end{align}

There will be intersection for $r < 2R$ so we can compute this as
		\begin{align}
			\begin{cases}
				\frac{2r}{\pi R^4}\left(2R^2\cos^{-1}\left(\frac{r}{2R}\right) - \frac{r}{2}\sqrt{4R^2 - r^2}\right) & r < 2R\\
				0 & \text{otherwise}
			\end{cases}
		\end{align}

This agrees with the result given by Mathai~\cite{mathai_1999}.
		
	\subsection{Two Disjoint Triangles}
		
		\begin{figure}[t]
			\centering
			\resizebox{0.4\textwidth}{!}{%
				\includegraphics{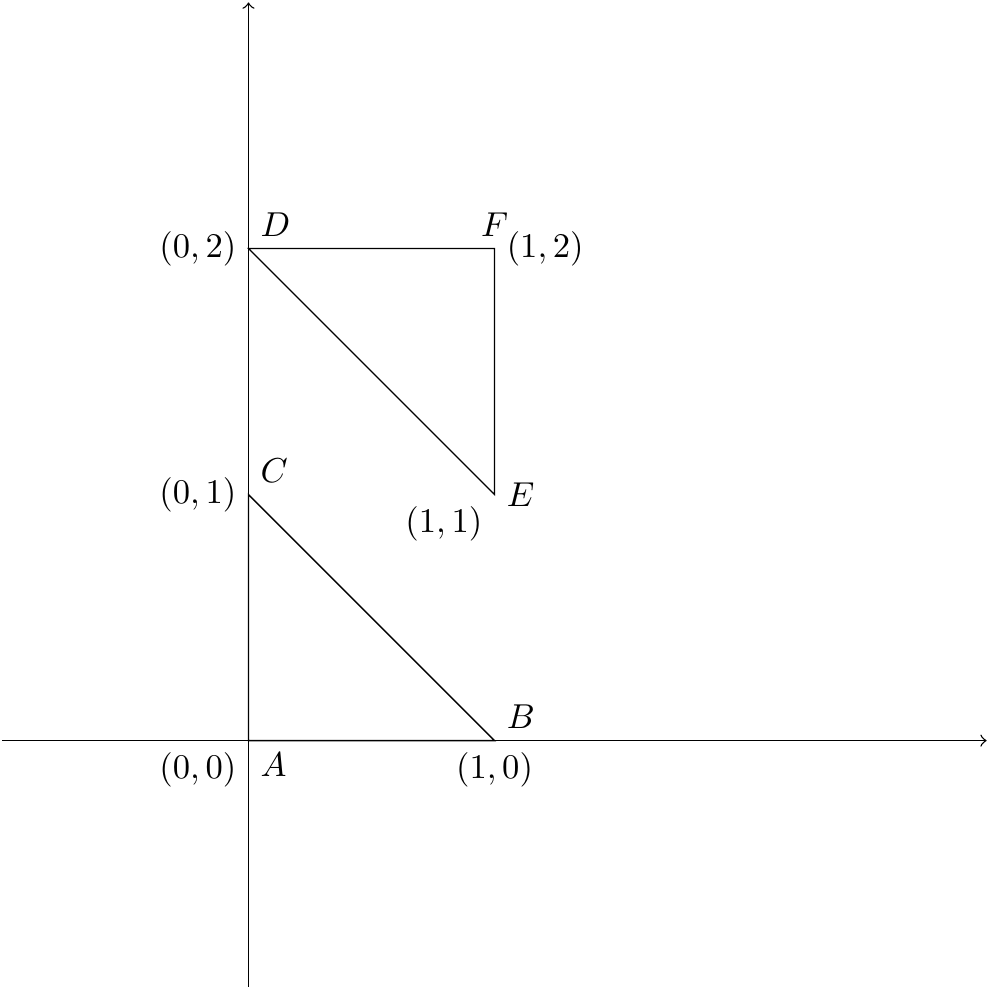}
			}%
				\caption{{Example Geometry for Closed Form Computation.} Example case for computation of distance distributions using Theorem~\ref{thm:new}. Here, $A = \triangle ABC$ and $B = \triangle DEF$.}
			\label{fig:triangles}
		\end{figure}

		Next we can obtain a closed form result for a case that, to the best of our knowledge, does not yet exist in the literature; two disjoint triangles. For example, consider the case that $\mathcal{A} = \triangle ABC$ and $\mathcal{B} = \triangle DEF$ shown in Fig.~\ref{fig:triangles}. In this case, the PDF is found to be (the computation is outlined in Appendix~\ref{sec:comp})
		\begin{align}
			f(r) &=
			\begin{cases}
				f_1(r) & r \in \left[\frac{1}{\sqrt{2}}, 1\right]\\
				f_2(r) & r \in [1, \sqrt{2}]\\
				f_3(r) & r \in [\sqrt{2}, 2]\\
				f_4(r) & r \in [2, \sqrt{5}]\\
				0 & \text{otherwise}
			\end{cases}, \label{eq:ex}
		\end{align}
		where $f_1$, $f_2$, $f_3$ and $f_4$ are given by

		%\begin{strip}
\begin{small}
			\begin{align}
				f_1(r) &= \frac{1}{8} \left(4 \sqrt{2 r^2-1}+r \sqrt{4 r^2-2} \sqrt{\frac{\sqrt{2 r^2-1}}{r^2}+1}+7 r \sqrt{\frac{2 \sqrt{2 r^2-1}}{r^2}+2} + r \sqrt{4 r^2-2} \sqrt{1-\frac{\sqrt{2 r^2-1}}{r^2}}\right. \nonumber \\%
				&\left. -7 r \sqrt{2-\frac{2 \sqrt{2 r^2-1}}{r^2}}+4 \left(r^2+2\right) \cos ^{-1}\left(\frac{\sqrt{2 r^2-1}+1}{2 r}\right)\right) - \frac{1}{2} \left(r^2+2\right) \cos ^{-1}\left(-\frac{\sqrt{2 r^2-1}-1}{2 r}\right)\\
				f_2(r) &= \frac{r^2}{2}+\frac{\pi  r^2}{4}-3 \sqrt{1-\frac{1}{r^2}} r+r^2 \left(-\csc ^{-1}(r)\right)+\frac{1}{8} \left(2 \left(5 \sqrt{2 r^2-1}-6\right)+ r \left(-\sqrt{4 r^2-2} \sqrt{\frac{\sqrt{2 r^2-1}}{r^2}+1} \right.\right. \nonumber \\
				&\left.\left. +9 \sqrt{\frac{2 \sqrt{2 r^2-1}}{r^2}+2}+2 r-16\right)+ 4 \left(r^2+4\right) \csc ^{-1}\left(\frac{2 r}{1-\sqrt{2 r^2-1}}\right)\right) \nonumber\\
				&+\frac{1}{2} \left(\left(6 \sqrt{1-\frac{1}{r^2}}-r+4\right) r-2 \left(r^2+2\right) \sec ^{-1}(r)-3\right)- \sin ^{-1}\left(\frac{1}{r}\right)+\cos ^{-1}\left(\frac{1}{r}\right)+2 \\
				f_3(r) &= 2 r-2 \cos ^{-1}\left(\frac{1}{r}\right)+\frac{7}{2}-\frac{r^2}{2}-2 \sqrt{\frac{4 \sqrt{2} \sqrt{r^2-2}}{r^2}+2} r+2 \sqrt{1-\frac{1}{r^2}} r-\frac{1}{2} \pi  \left(r^2+2\right)-\sqrt{2} \sqrt{r^2-2} \nonumber\\
				&+\left(r^2+2\right) \cos ^{-1}\left(-\frac{\sqrt{2 r^2-4}-2}{2 r}\right)+2 \cos ^{-1}\left(-\frac{\sqrt{2 r^2-4}-2}{2 r}\right)+\frac{1}{8} \left(2 \left(5 \sqrt{2 r^2-1}-6\right)+\right. \nonumber\\
				&\left. r \left(-\sqrt{4 r^2-2} \sqrt{\frac{\sqrt{2 r^2-1}}{r^2}+1}+9 \sqrt{\frac{2 \sqrt{2 r^2-1}}{r^2}+2}+2 r-16\right)+4 \left(r^2+4\right) \csc ^{-1}\left(\frac{2 r}{1-\sqrt{2 r^2-1}}\right)\right) \\
				f_4(r) &= -\frac{r^2}{2}+2 \sqrt{1-\frac{1}{r^2}} r+\sqrt{1-\frac{4}{r^2}} r+\frac{1}{8} \left(-2 r^2-\sqrt{4 r^2-2} \sqrt{\frac{\sqrt{2 r^2-1}}{r^2}+1} r+9 \sqrt{\frac{2 \sqrt{2 r^2-1}}{r^2}+2} r \right. \nonumber\\
				&\left. -16 \sqrt{1-\frac{4}{r^2}} r+10 \sqrt{2 r^2-1}+4 \left(r^2+4\right) \sin ^{-1}\left(\frac{1-\sqrt{2 r^2-1}}{2 r}\right)+ 4 \left(r^2+4\right) \cos ^{-1}\left(\frac{2}{r}\right)-28\right) \nonumber\\
				&+2 \sin ^{-1}\left(\frac{2}{r}\right)-2 \cos ^{-1}\left(\frac{1}{r}\right)-\frac{5}{2}.
			\end{align}
		\end{small}

		To validate \eqref{eq:ex}, we compare the computed theoretical PDF with one that was simulated and also to the numerical result obtained using the kinematic measure technique in~\cite{tong_2017}. To obtain a simulated PDF, 4000 points were chosen uniformly randomly inside each triangle (these triangles are depicted in Fig.~\ref{fig:triangles}), resulting in a total of 8000 simulated points, and each of the distances between pairs of points from each triangle were computed. This means that a total of $4000^2$ random distances were obtained, from which the simulated PDF was estimated using a kernel density estimation implemented in Mathematica. For the technique in~\cite{tong_2017}, their Matlab implementation was used. The graph of the computed function \eqref{eq:ex}, points from the kernel density estimation and the numerical result using the technique in~\cite{tong_2017} are shown in Fig.~\ref{fig:sim}. The results match, validating \eqref{eq:ex}.
		
Note that using similar techniques as outlined in Appendix~\ref{sec:comp}, it is possible to compute in closed form the PDF for any two arbitrary triangles. This is important because in conjunction with Proposition~\ref{prop:ext}, it enables us to compute the closed form PDF in the case that $\mathcal{A}$ and $\mathcal{B}$ are arbitrary polygons by decomposing each into triangles and performing the probabilistic sum.

		\subsection{Two Arbitrary Polygons}
		Consider finally a more complicated example depicted in Fig.~\ref{fig:polar}, with the corresponding PDFs plotted in Fig.~\ref{fig:polar_plot}. In this case, for simplicity, the integration in \eqref{eq:thm} was performed numerically using Mathematica. The mathematica code is provided in Appendix~\ref{sec:code} and is also downloadable from~\cite{durrani_2018}. The Mathematica code allows a user to define any two arbitrary polygons and determine the distance distribution numerically.
		
		Obtaining numerical results highlights one point of distinction between our proposed technique and the technique presented in~\cite{tong_2017}. Both can be used to obtain numerical results for arbitrary polygons. However, for the technique in~\cite{tong_2017} it is required to first decompose the polygon into triangles and then perform a probabilistic sum, such as outlined in Proposition~\ref{prop:ext}. In contrast, our proposed technique can use the fundamental result in~\eqref{eq:thm} directly for any arbitrary polygons.

		\begin{figure}[t]
			\centering
			\resizebox{0.4\textwidth}{!}{%
				\includegraphics{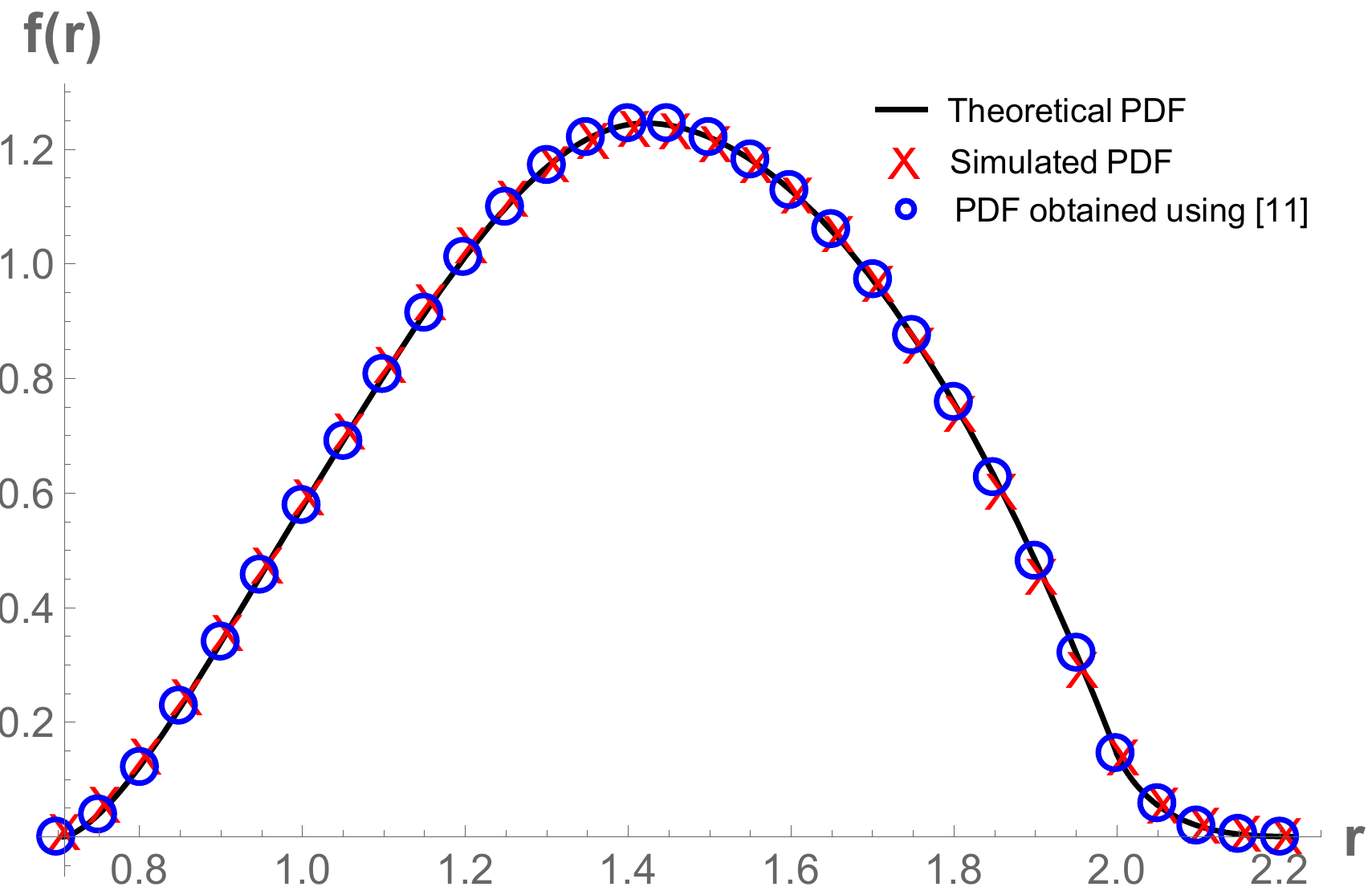}
			}%
			\caption{{Theoretical versus Simulated PDF Comparison.} Plot showing the graph of the theoretical PDF (black line), the simulated PDF (red crosses) and the numerical result from Tong and Pan's Matlab implementation (blue circles).}
			\label{fig:sim}
		\end{figure}
		
\section{Conclusion}\label{sec:conc}
	In this paper, we have proposed a novel framework for obtaining distance distribution between two random nodes in arbitrary polygons. The proposed framework uses measure theory and allows us to obtain closed form results for cases that have not yet been reported in the literature. We have also developed a Mathematica implementation of the proposed framework. Future work can extend this Mathematica implementation to automatically compute the distance distribution between two random nodes in arbitrary polygons in closed form, similar to~\cite{pure_2015} which uses Mathematica to compute the closed form distance distribution between a fixed reference point and one random node in an arbitrary polygon.

		\begin{figure}[t]
			\centering
			\resizebox{0.4\textwidth}{!}{%
				\includegraphics{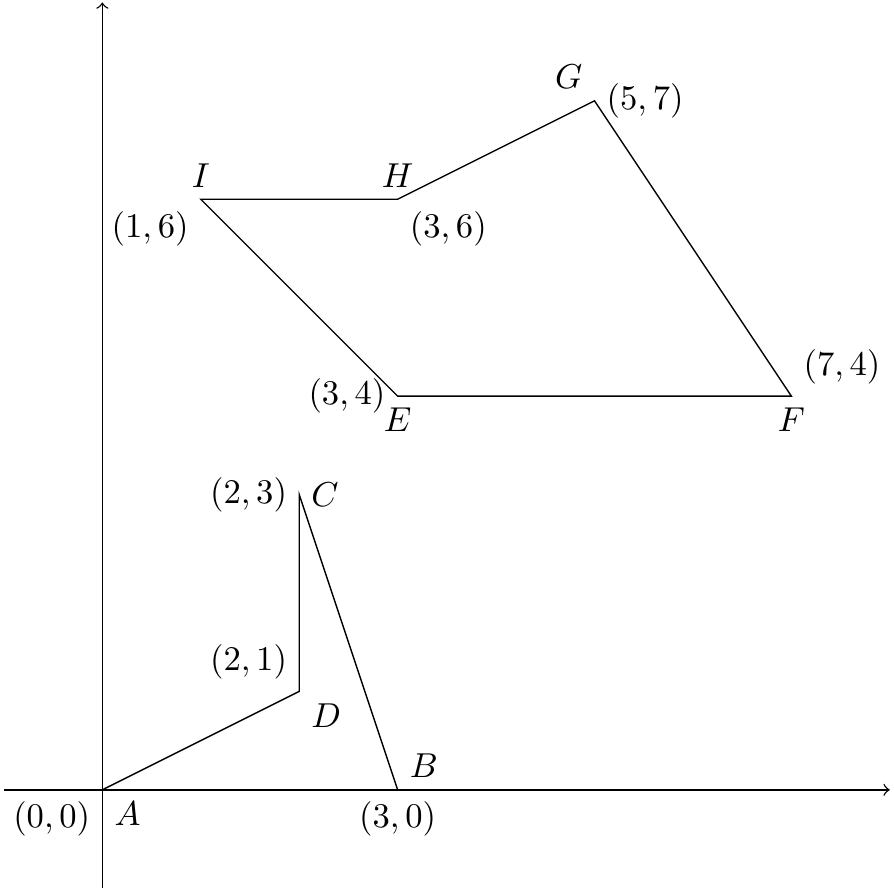}
			}%
				\caption{{Example Geometry for Numerical Computation.} Example pair of arbitrary polygons used to validate the integral formula \eqref{eq:fund}.}
			\label{fig:polar}
		\end{figure}
		
		\begin{figure}[t]
			\centering
			\resizebox{0.4\textwidth}{!}{%
				\includegraphics{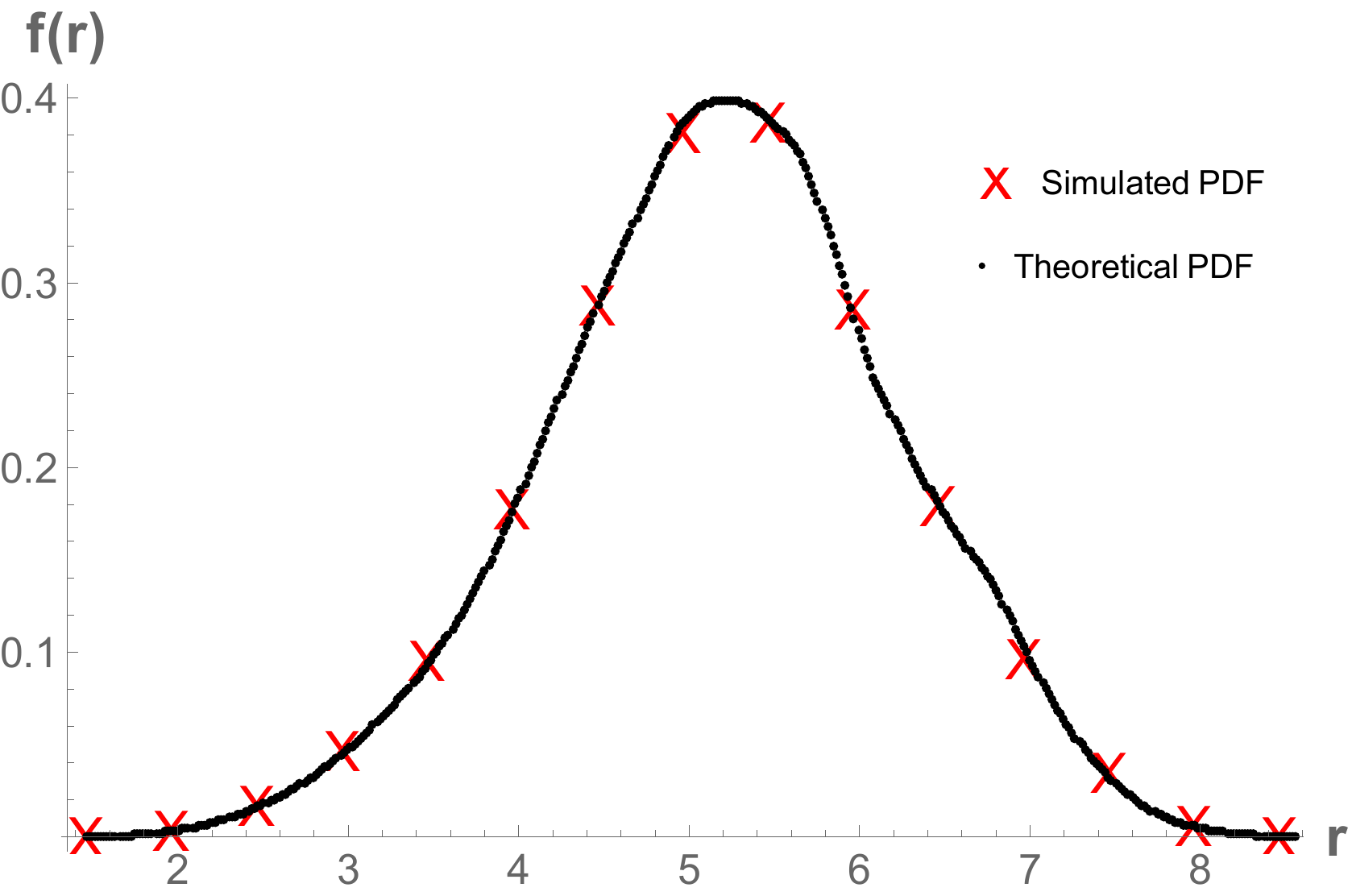}
			}%
			\caption{{Theoretical versus Simulated PDF Comparison.} Theoretical (black) and simulated PDF (red crosses) for the polygons shown in Figure~\ref{fig:polar}.}
			\label{fig:polar_plot}
		\end{figure}

%\section{Data Availability}\label{sec:data}
% The Mathematica data and code used to support the findings of this study are included within the article (see Appendix~\ref{sec:code}).

%\section{Funding Statement}\label{sec:funding}
% R. Pure's visit to Prof. J. Pan was supported in part by NSERC CFI and BCKDF. The work of S. Durrani was partially supported by the Australian Research Council's Discovery Project Funding Scheme (Project number DP170100939). The work of F. Tong was partially supported by the National Natural Science Foundation of China under Grant 61702452.

%\section{Conflict of Interest}\label{sec:interest}
%The authors declare that there is no conflict of interest regarding the publication of this paper.

%%%%%%%%%%%%%%%%%%%%%%%%%%%%%%%%%%%%%%%%%%%%%%
%%                                          %%
%% Backmatter begins here                   %%
%%                                          %%
%%%%%%%%%%%%%%%%%%%%%%%%%%%%%%%%%%%%%%%%%%%%%%

%\begin{backmatter}

\section{Appendices}

\subsection{Example Computation of Distance Distribution} \label{sec:comp}

	To use \eqref{eq:fund} we need to determine $\lambda_2(\mathcal{B}_{r,\theta} \cap \mathcal{A})$ explicitly. Clearly $\mathcal{B}_{r,\theta} \cap \mathcal{A}$ for any fixed $r$ and $\theta$ will be a polygon, and so to compute the area we need only determine the vertices at which point we can use the Shoelace Formula stated in Theorem~\ref{thm:shoelace}.
	
	To compute $\lambda_2(\mathcal{B}_{r,\theta} \cap \mathcal{A})$ we need to find what the vertices of the polygon $\mathcal{B}_{r,\theta} \cap \mathcal{A}$ are, which will depend on both $r$ and $\theta$. There will be different ranges of $r$ and $\theta$ for which the vertices will have the same expression, and so these ranges need to be determined. After this, the explicit form of $\lambda_2(\mathcal{B}_{r,\theta} \cap \mathcal{A})$ can be found using the Shoelace formula, which can then be integrated to determine the PDF using \eqref{eq:fund}.
	
	The vertices of $\lambda_2(\mathcal{B}_{r,\theta} \cap \mathcal{A})$ will either be vertices of $\mathcal{A}$, vertices of $\mathcal{B}_{r,\theta}$, or the points of intersection of sides from $\mathcal{A}$ and $\mathcal{B}_{r,\theta}$. Vertices of $\mathcal{A}$ are constant and are simply $(0,2)$, $(1,1)$, and $(1,2)$. The vertices of $\mathcal{B}_{r,\theta}$ are shifted copies of vertices from $\mathcal{B}$, and so are simply
	\begin{align}
		&(r\cos(\theta), r\sin(\theta)),\\
		&(1 + r\cos(\theta), r\sin(\theta)),\;\text{and}\\
		&(r\cos(\theta), 1 + r\sin(\theta)).
	\end{align}
	We will denote these vertices $A'$, $B'$ and $C'$ respectively. To find the intersection of two lines, we can utilise an explicit formula. If the first line is determined by the points $(x_1, y_1)$ and $(x_2, y_2)$, and the second line is determined by the points $(x_3, y_3)$ and $(x_4, y_4)$, then their point of intersection (assuming the lines are not parallel) is $(x, y)$, where
	\begin{align}
		x &= \frac{(x_1y_2 - x_2y_1)(x_3 - x_4) - (x_1 - x_2)(x_3y_4 - x_4y_3)}{(x_1 - x_2)(y_3 - y_4) - (y_1 - y_2)(x_3 - x_4)},\;\text{and}\\
		y &= \frac{(x_1y_2 - x_2y_1)(y_3 - y_4) - (y_1 - y_2)(x_3y_4 - x_4y_3)}{(x_1 - x_2)(y_3 - y_4) - (y_1 - y_2)(x_3 - x_4)}.
	\end{align}
	Using this, we can find all of the points of intersection of the sides from $\mathcal{A}$ and $\mathcal{B}_{r,\theta}$. They are (ignoring pairs of sides that are parallel)
	\begin{align}
		AB\;\text{and}\;DE&: \quad (2 - r\sin(\theta), r\sin(\theta)),\\
		AB\;\text{and}\;EF&: \quad (1, r\sin(\theta)),\\
		BC\;\text{and}\;EF&: \quad (1, r(\cos(\theta) + \sin(\theta))),\\
		BC\;\text{and}\;FD&: \quad (r(\cos(\theta) + \sin(\theta)) - 1, 2),\\
		CA\;\text{and}\;DE&: \quad (r\cos(\theta), 2 - r\cos(\theta)),\\
		CA\;\text{and}\;FD&: \quad (r\cos(\theta), 2).
	\end{align}
	We will denote these vertices $V_{11}$, $V_{12}$, $V_{22}$, $V_{23}$, $V_{31}$ and $V_{33}$ respectively. Next we determine the ranges of $r$ and $\theta$ for which each of the above 12 vertices constitute $\mathcal{B}_{r,\theta} \cap \mathcal{A}$. Since the vertices and hence also the end points of the line segments for $\mathcal{B}$ are translated around a circle of radius $r$, it is useful to have a formula for the points of intersection of a circle and a line. If the line is determined by the points $(x_1, y_1)$ and $(x_2, y_2)$, and if we define
	\begin{align}
		d_x &= x_2 - x_1,\\
		d_y &= y_2 - y_1,\\
		d_r &= \sqrt{d_x^2 + d_y^2},\;\text{and}\\
		D &= \begin{vmatrix}x_1 & x_2 \\ y_1 & y_2\end{vmatrix},
	\end{align}
	then the points of intersection are given by
	\begin{align}
		x &= \frac{D d_y \pm \text{sgn}^\ast(d_y)d_x\sqrt{r^2 d_r^2 - D^2}}{d_r^2}\\
		y &= \frac{-D d_x \pm |d_y|\sqrt{r^2 d_r^2 - D^2}}{d_r^2},
	\end{align}
	where
	\begin{align}
		\text{sgn}^\ast(x) =
		\begin{cases}
			-1 & x < 0\\
			1 & x \ge 0
		\end{cases}.
	\end{align}
	
	Using this, the correspondences are found to be as follows.
	\begin{itemize}
		\item $r \in \left[\frac{1}{\sqrt{2}}, 1\right]$:
		\begin{align}
			&\{V_{22}, C', V_{31}, E\}: \\
			&\theta \in \left[\cos^{-1}\left(\frac{1 + \sqrt{2r^2 - 1}}{2r}\right), \cos^{-1}\left(\frac{1 - \sqrt{2r^2 - 1}}{2r}\right)\right]
		\end{align}
		\item $r \in [1, \sqrt{2}]$:
		\begin{align}
			&\{V_{22}, C', V_{31}, E\}: \;  \theta \in \left[\cos^{-1}\left(\frac{1}{r}\right), \sin^{-1}\left(\frac{1}{r}\right)\right]\\
			&\{V_{11}, V_{12}, V_{22}, V_{23}, V_{33}, V_{31}\}: \;  \theta \in \left[\sin^{-1}\left(\frac{1}{r}\right), \frac{\pi}{2}\right]\\
			&\{B', V_{23}, D, V_{11}\}: \;  \theta \in \left[\frac{\pi}{2}, \frac{\pi}{2} + \sin^{-1}\left(-\frac{1 - \sqrt{2r^2 - 1}}{2r}\right)\right]
		\end{align}
		\item $r \in [\sqrt{2}, 2]$:
		\begin{align}
			&\{V_{12}, F, V_{33}, A'\}:\\
			&\theta \in \left[\cos^{-1}\left(\frac{1}{r}\right), \cos ^{-1}\left(\frac{2-\sqrt{2 r^2-4}}{2 r}\right)\right]\\
			&\{V_{11}, V_{12}, V_{22}, V_{23}, V_{33}, V_{31}\}:\\
			&\theta \in \left[\cos ^{-1}\left(\frac{2-\sqrt{2 r^2-4}}{2 r}\right), \frac{\pi}{2}\right]\\
			&\{B', V_{23}, D, V_{11}\}:\\
			&\theta \in \left[\frac{\pi}{2}, \frac{\pi}{2} + \sin^{-1}\left(-\frac{1 - \sqrt{2r^2 - 1}}{2r}\right)\right]
		\end{align}
		\item $r \in [2, \sqrt{5}]$:
		\begin{align}
				&\{V_{12}, F, V_{33}, A'\}: \; \theta \in \left[\cos^{-1}\left(\frac{1}{r}\right), \sin ^{-1}\left(\frac{2}{r}\right)\right]\\
				&\{B', V_{23}, D, V_{11}\}: \\
				&\theta \in \left[\frac{\pi}{2} + \cos ^{-1}\left(\frac{2}{r}\right), \frac{\pi}{2} + \sin ^{-1}\left(-\frac{1-\sqrt{2 r^2-1}}{2 r}\right)\right]
		\end{align}
	\end{itemize}
	This is all the required information to determine $\lambda_2(\mathcal{B}_{r,\theta} \cap \mathcal{A})$ using the Shoelace Formula. Substituting this into the integral in \eqref{eq:fund} yields the result \eqref{eq:ex}.

\subsection{Mathematica Code to Compute Distance Distribution} \label{sec:code}

	\begin{lstlisting}[caption={Simulation Functions}]
		(*PolygonArea computes the area of a polygon using the Shoelace Formula*)
		PolygonArea[P_] :=
		 1/2 Total[Det /@ Partition[Append[P, First@P], 2, 1]]
		(*RandomPointsTriangle generates uniformly random points in a triangle*)
		RandomPointsTriangle[{a_, b_, c_}, n_] := Module[{u, v},
		  {u, v} = Transpose[Sort /@ RandomReal[1, {n, 2}]];
		  Map[# a &, u] + Map[# b &, (v - u)] + Map[# c &, (1 - v)]
		  ]
		(*RandomPointsPolygon generates uniformly random points in a polygon*)
		
		RandomPointsPolygon[P_, n_] := Module[{p, s, t},
		  p = N[P];(*Numerically approximate vertices for faster computation.*)
		  s =
		   First /@
		    MeshPrimitives[
		     TriangulateMesh[
		      DiscretizeGraphics@Graphics[Polygon@p],
		      MaxCellMeasure -> Infinity],
		     2];
		  t = Accumulate[PolygonArea[#]/PolygonArea[p] & /@ s];
		  (*Triangulate the polygon. Calculate the area of the polygon.	Associate each triangle with its fraction of area of the polygon. Associate each triangle with a range calculated as the sum of all previous area fractions. This allows a triangle to be picked from a random variable generated between 0 and 1.*)
		  Select[
		   Map[
		    Function[
		     x,
		     Flatten@RandomPointsTriangle[
		       s[[First@FirstPosition[x < # & /@ t, True]]], 1]
		     ],
		    RandomReal[1, n]
		    ],
		   # != {} &]
		  (*Random points: Generate n random numbers between 0 and 1. For each random number, pick the corresponding triangle and generate a point in it.*)
		  ]
	\end{lstlisting}
	
	\begin{lstlisting}[caption={Random Data Generation}]
		n = 4000;
		Poly1 = {{0, 0}, {3, 0}, {2, 3}, {2, 1}};
		Poly2 = {{3, 4}, {7, 4}, {5, 7}, {3, 6}, {1, 6}};
		data1 = RandomPointsPolygon[Poly1, n];
		data2 = RandomPointsPolygon[Poly2, n];
	\end{lstlisting}
	
	\begin{lstlisting}[caption={Computing the Simulated PDF}]
		data = Norm[#[[1]] - #[[2]]] & /@ Tuples[{data1, data2}];
		dist = SmoothKernelDistribution@data;
		min = Min@data;
		max = Max@data;
	\end{lstlisting}
	
	\begin{lstlisting}[caption={Area of Intersection Function}]
		areatest[r_, \[Theta]_] := Area[RegionIntersection @@ Polygon /@ {N[# + {r Cos[\[Theta]], r Sin[\[Theta]]}] & /@ Poly1, N /@ Poly2}]
	\end{lstlisting}
	
	\begin{lstlisting}[caption={Perform Numerical Integration and Plot Result}]
		pts = 10;(*Number of PDF points to simulate; 10 points takes about a minute*)
		divs = 100;
		t = Table[{r, (2 \[Pi] r)/(divs (Area@Polygon@Poly1) (Area@Polygon@Poly2))
		      Total@Table[
		       areatest[r, \[Theta]], {\[Theta], 0, 2 \[Pi], (2 \[Pi])/divs}]}, {r, min, max, (max - min)/pts}];
		Show[
		 Plot[PDF[dist, x], {x, min, max}, PlotStyle -> {Black, Thick},
		  PlotLegends -> Placed[{"Numerical PDF"}, {0.85, 0.82}]],
		 ListPlot[t, PlotMarkers -> Style["X", {Red, FontSize -> 20}],
		  PlotLegends -> Placed[{"Simulated PDF"}, {0.85, 0.82}]],
		 AxesLabel -> {"r", "f(r)"}, LabelStyle -> Directive[Bold, 20],
		 TicksStyle -> Directive[Plain, 15]
		 ]
	\end{lstlisting}

\end{document}